\newtheorem{thm}{Theorem}
\theoremstyle{definition}
\newtheorem{defn}[thm]{Definition}
\theoremstyle{remark}
\newtheorem*{rem}{Remark}
\newcommand{\Z}{\mathbb{Z}}
\newcommand{\upket}{|\uparrow\rangle}
\newcommand{\upbra}{\langle\uparrow|}
\newcommand{\downket}{|\downarrow\rangle}
\newcommand{\downbra}{\langle\downarrow|}
\title{Consistent Rotation Maps Induce a Unitary Shift Operator in Discrete Time Quantum Walks}
\author{Clark Alexander\\ email: \href{mailto:gcalexander1981@gmail.com}{the author}}
\begin{document}
	
	\maketitle
	
	\begin{abstract}
		In this work we explain the necessity for consistently labeled rotation maps for efficiently computing coined discrete time quantum walks on regular graphs.
	\end{abstract}

\section{Introduction}
In Discrete Time Quantum Walks, hereafter DTQW, on regular graphs, one can compute more efficiently by the use of a ``quantum coin."  In this case a DTQW becomes the evolution of the vector $|\psi_0\rangle$ via unitary operator.  In the case of \cite{S} this can be achieved through reflections and shifts. In the case of a regular graph, one can reduce this further to $\hat{U} = \hat{S}\hat{C}$ where $\hat{S}$ is the ``shifting operator" which respects the graph structure, and $\hat{C}$ is the ``coin operator" which controls the propagation of waves emanating from each vertex.  In this work, we only wish to consider the construction of a unitary operator $\hat{S}$.

\section{The Main Theorem}
\subsection{The Construction of $\hat{S}$}
We will begin with the simplest possible example, that of a square.
\[
\begin{tikzpicture}
\node[circle, draw](1) at (0,1) {1};
\node[circle, draw](2) at (0,0) {2};
\node[circle, draw](3) at (1,0) {3};
\node[circle, draw](4) at (1,1) {4};
\foreach \from/\to in {1/2,2/3,3/4,4/1}
\draw (\from) -- (\to);

\end{tikzpicture}
\]

This is a 2-regular graph which means we have two coin states. From a historical perspective we call these states $\upket$ and $\downket$

Our shift operator is defined by
\begin{eqnarray}
\hat{S} & = & \upket \upbra \otimes |2\rangle\langle 1| + \upket \upbra \otimes |3\rangle\langle 2| \\
\nonumber &+& \upket \upbra \otimes |4\rangle\langle 3| +\upket \upbra \otimes |1\rangle\langle 4| \\
\nonumber &+& \downket \downbra \otimes |4\rangle\langle 1| + \downket \downbra \otimes |1\rangle\langle 2| \\
\nonumber &+& \downket \downbra \otimes |2\rangle\langle 3| +\downket \downbra \otimes |3\rangle\langle 4|
\end{eqnarray}

We can easily verify that this is unitary $\hat{S} \hat{S}^* = I$.
As we have it written the $\upket$ state sends us counterclockwise around the square and $\downket$ sends us clockwise.  If we had instead used a ``greedy labeling" where $\upket$ sends us to the smaller vertex label and $\downket$ to the larger vertex label then we would have this operator

\begin{eqnarray}
\hat{S}_{\text{incorrect}} & = & \upket \upbra \otimes |2\rangle\langle 1| + \upket \upbra \otimes |1\rangle\langle 2| \\
\nonumber &+& \upket \upbra \otimes |2\rangle\langle 3| +\upket \upbra \otimes |1\rangle\langle 4| \\
\nonumber &+& \downket \downbra \otimes |4\rangle\langle 1| + \downket \downbra \otimes |3\rangle\langle 2| \\
\nonumber &+& \downket \downbra \otimes |4\rangle\langle 3| +\downket \downbra \otimes |3\rangle\langle 4|
\end{eqnarray}

We can check
\[
\hat{S}_{\text{incorrect}} \hat{S}^*_{\text{incorrect}} = \begin{bmatrix}
2 & 0 & 0 & 0 & 0 & 0 & 0 & 0\\ 
0 & 2 & 0 & 0 & 0 & 0 & 0 & 0\\ 
0 & 0 & 0 & 0 & 0 & 0 & 0 & 0\\ 
0 & 0 & 0 & 0 & 0 & 0 & 0 & 0\\ 
0 & 0 & 0 & 0 & 0 & 0 & 0 & 0\\ 
0 & 0 & 0 & 0 & 0 & 0 & 0 & 0\\ 
0 & 0 & 0 & 0 & 0 & 0 & 2 & 0\\ 
0 & 0 & 0 & 0 & 0 & 0 & 0 & 2
\end{bmatrix}
\]
which is certainly not the Identity. So what happened?  The short answer is that we did not label the edges consistently, but rather, greedily or naively.  

Reasoning by example, for a general graph $G = (V,E)$ with regularity $d$, we have the shift operator defined by

\begin{equation}
\hat{S} = \sum_{j=1}^d \sum_{v\in V} |c_j\rangle \langle c_j| \otimes |w\rangle \langle v|
\end{equation} 

In this case the vertices $w$ are adjacent to the vertices $v$.  From a computational point of view we can more easily compute this by considering the \emph{rotation map} in matrix form.

\begin{defn}
	Given a regular graph $G = (V,E)$ with degree of regularity $d$, the rotation map on $G$ is a function
	\begin{equation}
	Rot_G : [|V|] \times [d] \rightarrow [|V|]\times [d]
	\end{equation}
	where $Rot_G(v,i) = (w,j)$ with the $i^{th}$ edge leaving from vertex $v$ enters vertex $w$, and the $j^{th}$ edge leaving vertex $w$ enters vertex $v$.
\end{defn}

and looking at \cite{A} we see the reduction to matrix form.

\begin{defn}
	Given a regular graph $G$ with degree of regularity $d$ we redefine the rotation map as a matrix $Rot^{(G)} \in \Z^{|V|\times d}$
	by
\begin{equation}
	Rot^{(G)}_{v,i} = w \label{rotation matrix}
\end{equation}
	where $Rot_G(v,i) = (w,j)$ as above. Since $Rot_G$ is an involution the information $j$ is redundant.
\end{defn}

With this in mind, we define the shifting operator as follows
\begin{equation}
\hat{S} =  \sum_{j=1}^d \sum_{v\in V} |c_j\rangle \langle c_j| \otimes |Rot^{(G)}_{v,j}\rangle \langle v| \label{shift-rotate}
\end{equation}

Reading this in slightly mathematically reduced language we have:\\

``When we `flip' our quantum coin and it lands on side $j$, our wave at vertex $v$ propagates to the adjacent vertex via the edge labeled $j$."

\subsection{The Rotation Map Should be Consistent}

One may easily read a rotation map from the adjacency matrix as follows:

	\begin{algorithm}
	\caption{Rotation map from the adjacency matrix; Julia/Octave style}
	\label{Adjacency to rotation}
	\begin{algorithmic}
		\State for $k$ in $1:|V|$
		\State $\hspace{0.5cm}$ counter = 1
		\State for $l$ in $1:|V|$
		\State $\hspace{0.5cm}$if $A[k,l] == 1$
		\State $\hspace{1cm} Rot^{(G)}[v,\text{counter}] = l$
		\State $\hspace{1cm}$ counter $+= 1$
		\State endif
		\State endfor
		\State endfor
	\end{algorithmic}	
\end{algorithm}

Pairing this algorithm with \ref{shift-rotate} we arrive at the greedy shifting operator, which we have seen is not unitary.  This will provide some sort of dynamics on the graph, but it will not be quantum mechanical since the probabilities will sum to $d^t >> 1$ rather than 1.  In order to keep things quantum mechanical we must consider a consistent edge labeling of $G$, which brings us to a consistent rotation map.

\begin{defn}
	Let $G=(V,E)$ be a graph with regularity $d$. We say $G$ is consistently edge labeled if each vertex has $d$ distinct incoming labels.  A Rotation map which encodes a consistent graph labeling is said to be a \emph{consistent rotation map}.  Moreover, this is an easily checkable criterion in the matrix formulation.  That is, each column of $Rot^{(G)}$ has $d$ distinct elements.
\end{defn}

\begin{rem}
	In Python we can check that each column has $d$ distinct elements by the call
	\[
	\text{numpy.sort(RotG, axis=0)}
	\]
	In Julia the call is
	\[
	\text{sort(RotG, dims = 1)}
	\]
\end{rem}

Now we come to the main theorem
\begin{thm}
	Let $G=(V,E)$ be a graph with regularity $d$ and rotation map $Rot^{(G)}$.  The shifting operator $\hat{S}$ defined by
	\[
	\hat{S} =  \sum_{j=1}^d \sum_{v\in V} |c_j\rangle \langle c_j| \otimes |Rot^{(G)}_{v,j}\rangle \langle v|
	\]
	is unitary if and only if $Rot^{(G)}$ is consistent.
	
\end{thm}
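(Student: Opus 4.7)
The plan is to compute $\hat{S}\hat{S}^*$ directly and show that the unitarity condition reduces precisely to the combinatorial requirement on the columns of $Rot^{(G)}$. Since the underlying Hilbert space $\C^d \otimes \C^{|V|}$ is finite dimensional, verifying $\hat{S}\hat{S}^* = I$ already suffices to conclude that $\hat{S}$ is unitary, so I only need to analyse that one product.

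First I would write
\[
\hat{S}^* = \sum_{k=1}^d \sum_{w \in V} |c_k\rangle\langle c_k| \otimes |w\rangle\langle Rot^{(G)}_{w,k}|
\]
and expand the product $\hat{S}\hat{S}^*$. Using the orthogonality relations $\langle c_j | c_k\rangle = \delta_{jk}$ on the coin register and $\langle v | w\rangle = \delta_{vw}$ on the vertex register, the quadruple sum collapses to
\[
\hat{S}\hat{S}^* = \sum_{j=1}^d |c_j\rangle\langle c_j| \otimes \Bigl( \sum_{v\in V} |Rot^{(G)}_{v,j}\rangle \langle Rot^{(G)}_{v,j}| \Bigr).
\]
Since the coin projectors already resolve the identity on $\C^d$, this product equals $I$ if and only if, for every fixed $j$, the inner operator $M_j := \sum_v |Rot^{(G)}_{v,j}\rangle\langle Rot^{(G)}_{v,j}|$ equals $I_{|V|}$.

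The final step is to recognise that $M_j = I_{|V|}$ iff the map $v \mapsto Rot^{(G)}_{v,j}$ is a bijection of $V$, which is exactly the consistency condition: each column of $Rot^{(G)}$ hits every vertex exactly once, so that each vertex receives a distinct label on each of its incoming edges. One direction is immediate after reindexing $w = Rot^{(G)}_{v,j}$ and using $\sum_w |w\rangle\langle w| = I_{|V|}$. Conversely, if $Rot^{(G)}_{v,j} = Rot^{(G)}_{v',j}$ for some $v \neq v'$, then $M_j$ has a diagonal entry of at least $2$ and cannot be the identity — precisely the pattern visible in the explicit $8\times 8$ matrix displayed earlier for $\hat{S}_{\text{incorrect}}\hat{S}^*_{\text{incorrect}}$. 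No serious obstacle arises; the proof is essentially a bookkeeping computation of the product, and the combinatorial interpretation follows from the standard fact that $\sum_{v} |f(v)\rangle\langle f(v)| = I$ iff $f:V\to V$ is a bijection.
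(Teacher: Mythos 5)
Your proof is correct and follows essentially the same route as the paper: compute $\hat{S}\hat{S}^*$ using orthonormality of the coin and position bases, reduce to the block operators $M_j = \sum_{v} |Rot^{(G)}_{v,j}\rangle \langle Rot^{(G)}_{v,j}|$, and observe that the product is the identity precisely when each column $v \mapsto Rot^{(G)}_{v,j}$ is a bijection of $V$. Your write-up is in fact tighter than the paper's own, since you explicitly note that $\hat{S}\hat{S}^* = I$ suffices for unitarity in finite dimensions and you spell out the bijection argument (double-counted diagonal entries versus a resolution of the identity) that the paper only sketches.
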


\begin{proof}
First let's consider the oeprator $\hat{S}^*$

\begin{equation}
\hat{S} =  \sum_{j=1}^d \sum_{v\in V} |c_j\rangle \langle c_j| \otimes |v\rangle \langle Rot^{(G)}_{v,j}|
\end{equation}

Now we consider the product $\hat{S}\hat{S}^*$.  There are a few obvious reductions as we have defined the bases of our Hilbert spaces $\mathcal{H}_\mathcal{P}$ and $\mathcal{H}_C$ to be orthonormal. Thus we have
\[
\langle c_j | c_k \rangle  = \delta_{jk}, \text{ and } \langle j|k \rangle = \delta_{jk}
\]

Let's restrict our attention to a single basis element of $\mathcal{H}_C$ in the product $\hat{S}\hat{S}^*$

\[
|c_j\rangle\langle c_j| \otimes \left( \sum |Rot^{(G)}\rangle \langle v| \right)  \left(\sum_{w\in V} |w\rangle\langle Rot^{(G)}| \right)
\]

In the operator $\hat{S}$ we have a exactly one copy of $\langle v|$ for each basis state in the coin space.  If we don't have $d$ distinct labels in the Rotation map, then the diagonal of our matrix will contain at least one zero, and we'll have a nontrivial kernel.  

More explicitly
\[
\hat{S}\hat{S}^* = \sum |Rot^{(G)}_{v,j}\rangle \langle Rot^{(G)}_{v,j}|
\]
This produces a projection for each iteam in the $j^{th}$ column of the rotation map in matrix form.  The identity is simply the sum of $d$ unique projections.

If we have a consistent rotation map then $\hat{S}\hat{S}^*$ will be the identity.  If we do not have a consistent rotation map then $\hat{S}\hat{S^*}$ will not be the identity.  Thus $Rot^{(G)}$ is a consistent rotation map if,and only if $\hat{S}$ is unitary.

\end{proof}

\section{Technical Difficulties}

It is an open question as to whether there is a constructive polynomial time algorithm to construct a consistent rotation map on a random regular graph.  We know how to construct consistent rotation maps on several families of regular graphs, but we do not know how to construct this in general.  The author has built several heuristic solvers and has had some success with smaller graphs with lower degrees of regularity, but even a moderately sized graph for example 80 vertices, 12 regular evade a fully consistent rotation map.


\begin{thebibliography}{99}
	    \bibitem[A]{A} Alexader, C. \emph{A Note on Consistent Rotation Maps on Graph Cartesian Products}, DOI: 10.13140/RG.2.2.19721.57446, 2021
	    
	    \bibitem[RVW]{zig-zag}Reingold, O.; Vadhan, S.; Widgerson, A. (2000), \emph{Entropy waves, the zig-zag graph product, and new constant-degree expanders and extractors}, 41st Annual Symposium on Foundations of Computer Science: 3–13, arXiv:math/0406038, doi:10.1109/SFCS.2000.892006, ISBN 978-0-7695-0850-4
	    
		\bibitem[S]{S} Szegedy, M., \emph{Quantum Speed-up of Markov Chain Based Algorithms}, 45th Annual IEEE Symposium on Foundations of Computer Science, 2004
\end{thebibliography}
\end{document}